\documentclass[conference]{IEEEtran}

\usepackage[usenames, dvipsnames]{color}
\usepackage[cmex10]{amsmath} 
\interdisplaylinepenalty=2500 
\usepackage{amssymb,amsthm}

\usepackage[nolist]{acronym}

\usepackage{enumerate}

\usepackage{subcaption}
\usepackage{calc}
\usepackage{cite}
\usepackage{graphicx}

\usepackage{array,multirow}

\usepackage{cite}
\usepackage{tikz,pgfplots}

\newtheorem{theorem}{Theorem}

\usepackage{amsmath} 

\newcommand{\mpr}{K}

\newcommand{\cri}{l}
\newcommand{\pgf}{Q}

\newcommand{\nuser}{n}

\newcommand{\ecri}{L}

\newcommand{\T}{\mathrm{T}}

\newcommand{\E}{\mathrm{E}}

\begin{acronym}
\acro{AP}{access point}
\acro{RA}{random access}
\acro{LT}{Luby Transform}
\acro{BP}{belief propagation}
\acro{i.i.d.}{independent and identically distributed}
\acro{IoT}{Internet of Things}
\acro{MTC}{machine type communication}
\acro{PER}{packet error rate}
\acro{SIC}{successive interference cancellation}
\acro{URLLC}{ultra-reliable low latency communication}
\acro{PMF}{probability mass function}
\acro{CRI}{collision resolution interval}
\acro{PGF}{probability generating function}
\acro{CPGF}{conditional probability generating function}
\acro{rhs}{right-hand side}
\acro{CSA}{coded slotted ALOHA}
\acro{MPR}{multi-packet reception}
\acro{MTA}{modified binary-tree algorithm}
\acro{SICTA}{binary-tree algorithm with SIC}
\acro{BTA}{binary tree-algorithm}
\acro{FCFS}{First Come First Served}
\acro{r.v.}{random variable}
\acro{w.r.t.}{with respect to}
\acro{RA}{random access}
\acro{TA}{tree algorithm} 
\end{acronym}

\newcommand\copyrighttext{%
  \footnotesize \textcopyright~\the\year~IEEE. Personal use of this material is permitted.
  Permission from IEEE must be obtained for all other uses, in any current or future
  media, including reprinting/republishing this material for advertising or promotional
  purposes, creating new collective works, for resale or redistribution to servers or
  lists, or reuse of any copyrighted component of this work in other works.}
\newcommand\copyrightnotice{%
\begin{tikzpicture}[remember picture,overlay]
\node[anchor=north,yshift=-15pt] at (current page.north) {\fbox{\parbox{\dimexpr\textwidth-\fboxsep-\fboxrule\relax}{\copyrighttext}}};
\end{tikzpicture}
\vspace{-0.3cm}
}

\title{
Analysis of Tree-Algorithms with \\ Multi-Packet Reception
}

\author{\IEEEauthorblockN{\v Cedomir Stefanovi\' c\IEEEauthorrefmark{1}, H. Murat G\"ursu\IEEEauthorrefmark{2}, Yash Deshpande\IEEEauthorrefmark{2},  Wolfgang Kellerer\IEEEauthorrefmark{2} \\
\IEEEauthorblockA{
\IEEEauthorrefmark{1}Department of Electronic Systems, Aalborg University, Denmark}}
\IEEEauthorblockA{
\IEEEauthorrefmark{2}Chair of Communication Networks, Technical University of Munich, Germany \\
Email: cs@es.aau.dk, \{murat.guersu,yash.deshpande,wolfgang.kellerer\}@tum.de }}


\begin{document}

\maketitle
\copyrightnotice 
\thispagestyle{empty}
\pagestyle{empty}

\begin{abstract}

In this paper, we analyze binary-tree algorithms in a setup in which the receiver can perform multi-packet reception (MPR) of up to and including $\mpr$ packets simultaneously. The analysis addresses both traffic-independent performance as well as performance under Poisson arrivals.
For the former case, we show that the throughput, when normalized with respect to the assumed linear increase in resources required to achieve $\mpr$-MPR capability, tends to the same value that holds for the single-reception setup.
However, when coupled with Poisson arrivals in the windowed access scheme, the normalized throughput \emph{increases} with $\mpr$, and we present evidence that it asymptotically tends to 1. 
We also provide performance results for the modified tree algorithm with $\mpr$-MPR in the clipped access scheme.
To the best of our knowledge, this is the first paper that provides an analytical treatment and a number of fundamental insights in the performance of tree-algorithms with MPR.

\end{abstract}


\section{Introduction}
\label{sec:Intro}

The promise of \ac{IoT} 
has profoundly influenced research and development of cellular access systems in the past decade.
The recent introduction of massive IoT (mIoT) and ultra-reliable service categories in 5G standardization is likely the best example of the foreseen significance of IoT communications. 
An outstanding feature of IoT communications is the need to efficiently deal with the randomness in users' activations in the access network. 
In mIoT terms, with potentially thousands of devices in a cell~\cite{D2.1}, the vital requirement from the system perspective is the throughput, i.e., the efficiency of the use of system resources.
From users' perspective, the vital requirement is a low outage probability, which is directly related to a bounded access delay.
A combination of these two requirements can be made in the form of stable throughput, which can be defined as the throughput for which the access delay is bounded.  

The proverbial examples of \ac{RA} algorithms are ALOHA~\cite{Abramson:ALOHA} and tree algorithms~\cite{capetanakis1979tree}.
The former serve as the basis for access networking in numerous cellular technologies, due to their simplicity.
However, the latter achieve significantly higher stable throughputs~\cite{massey1981collision}.
Still, the throughput performance in both cases is modest: the maximum stable throughput of slotted ALOHA is $0.3679 \, \frac{\mathrm{packet}}{\mathrm{slot}}$~\cite{Abramson:ALOHA} and for tree-algorithms $0.4878 \, \frac{\mathrm{packet}}{\mathrm{slot}}$~\cite{V1986}, which motivated development of advanced \ac{RA} schemes.
A way to improve the performance is to employ \ac{MPR} that can be achieved through use of CDMA, multi-access coding techniques, MIMO, etc., as it more effectively deals with the mutual interference of concurrently active users.  
\ac{MPR} pushes the performance of ALOHA-based schemes significantly, e.g., see~\cite{NMT2005,GSP2015,BR2018}.  
However, to the best of our knowledge, use of \ac{MPR} in tree-algorithms has not been genuinely investigated and the related work on the topic is scarce.
The work in \cite{gau2011tree} analyzes \ac{MPR} in a tree-algorithm with continuous arrivals with a small number of users in the system, proposing a transmission strategy that guarantees stability.
These aspects limit reaching general conclusions.
The work in~\cite{GSP2018} proposes a hybrid scheme combining user identity decoding via the $\mpr$-\ac{MPR} tree-algorithm and user data decoding via polling.
The performance figures in~\cite{GSP2018} are derived for the scheme as a whole, while the analysis of the tree-algorithm based component is basic and not corresponding to the usual tree-algorithm setup. 

In this paper we establish a communication-theoretic analysis of the use of \ac{MPR} in tree-algorithms and provide some fundamental insights.
For this purpose, we investigate binary tree-algorithms on a $\mpr$-collision channel (as introduced in Section~\ref{sec:model}), which is a simple, but useful abstraction of the $K$-\ac{MPR} capability.
The main contributions are:
\begin{itemize}
\item We investigate the traffic-independent performance of the \ac{BTA}~\cite{capetanakis1979tree} and derive both recursive and non-recursive expressions for the expected length of \ac{CRI}, conditioned on the number of users $\nuser$. 
We compute the throughput of the scheme conditioned on $\nuser$, providing evidence that its value, when normalized with $\mpr$, tends to stabilize around the  value  characteristic for the single-packet reception case.  
\item We analyse the scheme under Poisson arrivals in the windowed access setup and derive bounds on the stable throughput.
We show that the performance improves with $\mpr$, implying that \ac{MPR} does pay-off in windowed access.
\item Finally, we discuss extensions of the analysis, by showing results for the \ac{MTA}~\cite{massey1981collision} and its combination with the clipped access~\cite{RS1990} that is known for its superior performance.
We also discuss applications to other types of tree algorithms~\cite{MF1985}.
\end{itemize}

The paper is organized as follows.
Section~\ref{sec:background} introduces the background and Section~\ref{sec:model} 
the system model.
Section~\ref{sec:BTA_analysis} analyzes the traffic-independent performance of \ac{BTA} on $\mpr$-collision channel.
Section~\ref{sec:arrivals} investigates the performance under Poisson arrivals in windowed access.
Section~\ref{sec:extensions} elaborates extensions of the analysis.
Section~\ref{sec:conclusions} concludes the paper.

\section{Background}
\label{sec:background}


\subsection{Tree Algorithms}
\label{sec:TA}

We are interested in the classical random-access problem in which a set of of randomly arrived users, whose identities and the number $\nuser$ are a-priori unknown, contend for the access to a common \ac{AP}.
We focus on tree algorithms, whose core concept is the collision resolution driven by the feedback from the \ac{AP}.
The basic form of the algorithm, \ac{BTA}, on the standard collision channel (as defined in Section~\ref{sec:CC}), operates as follows.
The channel resources are grouped in slots, and all $\nuser$ users transmit their packets in the first slot.
If the slot is singleton (i.e., $\nuser = 1$), the packet in it is decoded.
In case of a collision (i.e., $\nuser > 1 $), the collision resolution starts.
The users split into two groups, e.g., group 0 and group 1; every user decides which group to join randomly and independently of any other user.
All users in group 0 transmit in the next slot.
If the slot is idle (i.e., no user selected group 0), the users from group 1 transmit in the next slot.
If the slot is singleton, the packet in it becomes decoded and the users from group 1 transmit in the next slot.
Finally, if the slot contains a collision, the users in group 0 split in two new groups and the procedure is recursively repeated until all packets from users in group 0 become decoded, after which the users in group 1 transmit in the next slot.
The procedure is then repeated for the users in group 1. 
The scheme ends when all $\nuser$ packets are decoded.
Should a user transmit in a slot, perform a split, or wait, is decided by monitoring the feedback sent from the \ac{AP}, see Section~\ref{sec:model} for details.
Fig.~\ref{fig:BTA}-a) shows an example of the scheme.
When the scheme is used in a gated access setup, its stable throughput reaches $0.346 \, \frac{\mathrm{packet}}{\mathrm{slot}}$~\cite{capetanakis1979tree}.\footnote{See Section~\ref{sec:arrivals} for details on gated access and stable throughput.}

The work in~\cite{capetanakis1979tree} was followed by a number of improvements and generalizations; we outline only the ones relevant to this work.
In the \ac{MTA}, if the slot dedicated to group 0 in a generic split happens to be idle, the users from group 1 split immediately, thus potentially avoiding repetition of the same collision.
In the example in Fig.~\ref{fig:BTA}-a), there would be an immediate split after slot 9 in the \ac{MTA}.
This simple modification increases the maximum stable throughput in gated access to $0.375\, \frac{\mathrm{packet}}{\mathrm{slot}}$~\cite{massey1981collision}.
Another improvement is to use tree-algorithms in the windowed access setup (see Section~\ref{sec:arrivals}) that pushes the maximum stable throughput, e.g., to 0.4294~$\frac{\mathrm{packet}}{\mathrm{slot}}$ in case of \ac{BTA}~\cite{massey1981collision}.
The performance can be further pushed by using \ac{MTA} with biased splitting in the clipped access setup (which is a modification of the windowed access), achieving the stable throughput of $0.4877 \, \frac{\mathrm{packet}}{\mathrm{slot}}$~\cite{RS1990}.
Finally, the best performing tree-algorithm so far is a variant of the previous scheme, elaborated in~\cite{V1986}, achieving the stable throughput of $0.4878 \, \frac{\mathrm{packet}}{\mathrm{slot}}$.

\begin{figure}[t]
  \centering
  \includegraphics[width=0.88\columnwidth]{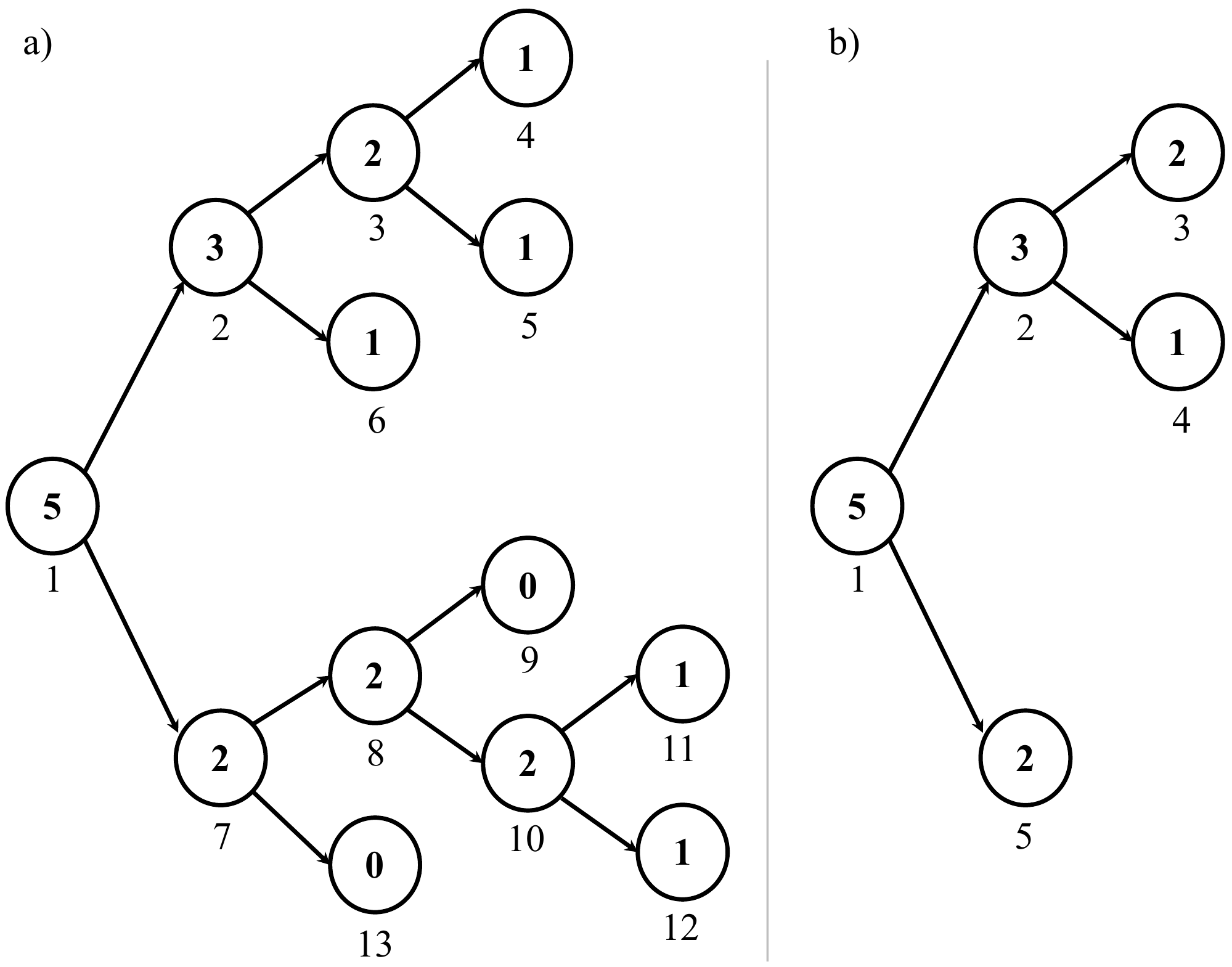}
  \caption{Binary-tree algorithm (BTA): A node represents a slot, the number inside the node shows the number of users (i.e., packets) in the slot, and the number beneath shows the sequence number of the slot. There are $\nuser=5$ users. a) \ac{BTA} on standard, 1-collision channel.  b) \ac{BTA} on 2-collision channel.}
  \label{fig:BTA}
\end{figure}


\subsection{$\mpr$-Collision Channel}
\label{sec:CC}

The performance figures stated in Section~\ref{sec:TA} were derived for the collision channel, the default channel model for the assessment of \ac{RA} algorithms.
Here, we introduce its generalization in the form of $\mpr$-collision channel, as follows:
\begin{enumerate}[(i)]
\item If there are no packets in the slot, the slot is idle. 
\item If there are up to and including $\mpr$ packets colliding in a slot, all packets are successfully decoded.
\item If the number of colliding packets in a slot is above $\mpr$, the slot is a collision slot and no packet can be decoded. 
\end{enumerate}
This model reduces to the standard collision channel for $\mpr=1$.
It is a simple model of \ac{MPR} capability, adequate for $\mpr$-out-of-$\nuser$ coding schemes~\cite{MACbook,OP2017,GSP2018}, or spread-spectrum systems~\cite{MGA2017}.
A feature common to all these works is that the amount of (time-frequency) resources needed to achieve $\mpr$-\ac{MPR} linearly increases with $\mpr$.
That is, slots in $\mpr$-collision channel are $\mpr$ times larger compared to the standard collision channel.
This assumption holds in this paper.
Fig.~\ref{fig:BTA}-b) shows an example of \ac{BTA} on 2-collision channel.

\section{System Model}
\label{sec:model}

We assume a batch of $\nuser$ users, contending for the access to a common \ac{AP} over a $\mpr$-collision channel.\footnote{In this section and in Section~\ref{sec:BTA_analysis}, our focus is on a single batch of $\nuser$ arrived users. Section~\ref{sec:arrivals} deals with the incorporation of an arrival process.}
The channel resources are divided in slots, and users are slot-synchronized.
The users contend by transmitting packets that perfectly fit in slots.
There is also a perfect feedback channel from the \ac{AP}.

All $\nuser$ users transmit in the first slot that appears on the channel. 
The interval elapsed from the first slot up to and including the slot when all $\nuser$ users become resolved is denoted as \ac{CRI}.
The length of a \ac{CRI} conditioned on $\nuser$ is a \ac{r.v.} denoted by $\cri_\nuser$.

\begin{table}[t]
  \begin{center}
    \caption{Values of feedback and states of the counters of the users contending in the example in Fig.~\ref{fig:BTA}-b).}
    \label{tab:BTA_example}
    \begin{tabular}{|c|c|c|c|c|c|c|}
    \hline
      \multirow{2}{*}{\textbf{slot no.}} & \multicolumn{5}{c| } {\textbf{state of counter} } & \multirow{2}{*}{\textbf{feedback}} \\ \cline {2-6}
                & \textbf{\textit{user 1}} & \textbf{\textit{user 2}} & \textbf{\textit{user 3}} & \textbf{\textit{user 4}} & \textbf{\textit{user 5}} & \\ \hline
      1         & 0     & 0     & 0     & 0     & 0         & e \\ \hline
      2         & 0     & 1     & 0     & 0     & 1         & e \\ \hline
      3         & 0     & 2     & 1     & 0     & 2         & 1 \\ \hline
      4         & -1    & 1     & 0     & -1    & 1         & 1 \\ \hline
      5         & /     & 0     & -1    & /     & 0         & 1 \\ \hline
      end       & /     & -1    & /     & /     & -1        & / \\ \hline
    \end{tabular}
  \end{center}
\end{table}

We now proceed by elaboration of the. \ac{BTA}.
We denote the slots in a \ac{CRI} by their numbers, starting with $1$.
The number of users transmitting in slot $j$ is denoted by $n_j$.
After every slot $j$, the \ac{AP} broadcasts the following feedback
\begin{align}
f_j = \begin{cases}
 0, &\text{ if } n_j = 0 \\
 1, &\text{ if } n_j \leq \mpr \\
 \text{e}, &\text{ if } n_j > \mpr .
\end{cases}  
\end{align}
Each user $i$ maintains a counter, whose state in slot $j$ is denoted by $C_{i,j}$, where $C_{i,1}=0$, $\forall i$, i.e., the initial value is 0.
The state of the counter determines the user's transmission decisions:
(i) if $C_{i,j}=0$, user $i$ transmits in slot $j$, (ii) if $C_{i,j} > 0$, user $i$ abstains from transmitting in slot $j$, and (iii) if $C_{i,j} < 0$, user $i$ has become resolved and does not contend further.
The state of the counter is updated through feedback and splitting  procedure, as follows:
\begin{align}
\label{eq:counter}
C_{i,j + 1} = \begin{cases}
 b_{i,j}, &\text{ if } f_j = e \text{ and } C_{i,j} = 0 \\
 C_{i,j} + 1, &\text{ if } f_j = e \text{ and } C_{i,j} > 0 \\
 C_{i,j} - 1, &\text{ if } f_j \in \{0,1\}
\end{cases}  
\end{align}
where $b_{i,j}$ is a Bernoulli \ac{r.v.} with $\Pr \{ b_{i,j} = 0 \} = p$ and $\Pr \{ b_{i,j} = 1 \} = 1-p$ (if $p=\frac{1}{2}$, the splitting is fair; otherwise, the splitting is biased).
The topmost case in \eqref{eq:counter} pertains to a split after the collision in slot $j$, where $b_{i,j}$ determines whether user $i$ will join the generic group 0 or group 1.
To facilitate easier understanding, Table~\ref{tab:BTA_example} lists the values of the feedback and the states of the users' counters for the example in Fig.~\ref{fig:BTA}-b), assuming that: in slot 2, user 1, 3, and 4 chose group 0, while user 2 and 5 chose group 1; and in slot 3, user 1 and 4 chose group 0 and user 3 chose group 1.

\subsection*{Performance Parameters}

The primary parameter of interest is the expected \ac{CRI} length conditioned on $\nuser$, denoted by $\ecri_\nuser = \mathrm{E}[ \cri_\nuser ]$.
We are also interested in the conditional throughput, defined as
\begin{align}
\label{eq:T}
    \T_n = \frac{1}{\mpr}\frac{n}{\ecri_\nuser}.
\end{align}
The throughput in \eqref{eq:T} is the measure of the efficiency of resource use, where the normalization with $\mpr$ reflects the increase in the amount of resources contained in a slot required to achieve $\mpr$-\ac{MPR}.

\section{Analysis of Traffic-Independent Performance}
\label{sec:BTA_analysis}

The length of a \ac{CRI} conditioned on $\nuser$ is
\begin{align}
    \cri_\nuser = \begin{cases}
    1, & n \leq \mpr \\
    1 + \cri_{i} + \cri_{\nuser - i}, & n > \mpr
    \end{cases}
\end{align}
where $i$ is the number of users that joined group 0 and $\nuser-i$ is the number of users that joined group 1.

The expected conditional length of \ac{CRI} for $n>K$ is simply
\begin{align}
\label{eq:BTA_ecri}
    \ecri_\nuser = 1 + \sum_{i = 0}^\nuser {\nuser \choose i} p^{ i } ( 1 - p )^{\nuser -i} (\ecri_i  + \ecri_{\nuser - i }) .
\end{align}
By developing \eqref{eq:BTA_ecri}, $\ecri_\nuser$ can be  calculated recursively through
\begin{align}
\label{eq:BTA_ecri_1}
    \ecri_\nuser = \begin{cases} 
    1, & \nuser \leq \mpr  \\
    \frac{1 + \sum_{i = 0}^{K} g (\nuser,i,p)}{ 1 - g (\nuser,0,p)} & \nuser = \mpr + 1\\
    \frac{1 + \sum_{i = 0}^{K} g (\nuser,i,p)  + \sum_{i = K + 1}^{\nuser - 1} g (\nuser,i,p) \ecri_i}{ 1 - g (\nuser,0,p)} & \nuser > \mpr + 1
    \end{cases}
\end{align}
where $g(n,i,p)={n \choose i}\left( p^i ( 1- p )^{n-i} + p^{n-i}(1 - p)^i \right)$.

The next theorem gives the non-recursive expression for $\ecri_\nuser$.
\begin{theorem}
\label{th:non_recursive}
\begin{align}
\label{eq:BTA_ecri_final}
\ecri_\nuser = 1 -  {\nuser \choose \mpr} \sum_{j = 1}^{\nuser - \mpr}  \frac{2 \, j \, (-1)^{j} {\nuser -\mpr \choose j}} {(j + \mpr) ( 1 - p^{j+\mpr}-(1-p)^{j+\mpr})}.
\end{align}
\end{theorem}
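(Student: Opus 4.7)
The plan is to solve \eqref{eq:BTA_ecri_1} by a binomial ansatz and then repackage the result to match \eqref{eq:BTA_ecri_final}. First I would symmetrize \eqref{eq:BTA_ecri} into $\ecri_\nuser = 1 + \sum_{i=0}^{\nuser}g(\nuser,i,p)\,\ecri_i$ for $\nuser>\mpr$, by substituting $i\leftrightarrow \nuser-i$ in one copy of the sum. I would then seek $\ecri_\nuser$ in the form $\ecri_\nuser = \sum_{m=0}^{\nuser}\binom{\nuser}{m}c_m$, substitute, and swap the order of summation. Using the identity $\binom{\nuser}{i}\binom{i}{j}=\binom{\nuser}{j}\binom{\nuser-j}{i-j}$ together with the binomial theorem, the inner sum over $i$ collapses to $\binom{\nuser}{j}\bigl(p^j+(1-p)^j\bigr)$, so the recursion becomes $\ecri_\nuser = 1 + \sum_{j=0}^{\nuser}\binom{\nuser}{j}\bigl(p^j+(1-p)^j\bigr)c_j$. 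Equating this with the ansatz $\ecri_\nuser=\sum_j\binom{\nuser}{j}c_j$, using that $\ecri_\nuser=1$ for $\nuser\leq\mpr$ forces $c_0=1$ and $c_1=\dots=c_\mpr=0$, and noting that the factor $1-p^j-(1-p)^j$ vanishes for $j\in\{0,1\}$, produces the reduced linear system
\[
\sum_{j=\mpr+1}^{\nuser}\binom{\nuser}{j}\bigl(1-p^j-(1-p)^j\bigr)c_j = 2, \qquad \nuser>\mpr .
\]

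This triangular system admits a unique solution, and I would assert $c_j = \binom{j-1}{\mpr}\,\dfrac{2(-1)^{j-\mpr+1}}{1-p^j-(1-p)^j}$. Plugging this in cancels the denominators and reduces the problem to the purely combinatorial identity
\[
\sum_{j=\mpr+1}^{\nuser}(-1)^{j-\mpr-1}\binom{\nuser}{j}\binom{j-1}{\mpr} = 1, \qquad \nuser>\mpr ,
\]
which I regard as the main obstacle. I would prove it by induction on $\nuser$: Pascal's rule $\binom{\nuser+1}{j}=\binom{\nuser}{j}+\binom{\nuser}{j-1}$ splits the sum into the inductive hypothesis (value $1$) plus a shifted sum that, after reindexing by $i=j-1$ and using $\binom{\nuser}{i}\binom{i}{\mpr}=\binom{\nuser}{\mpr}\binom{\nuser-\mpr}{i-\mpr}$, collapses by the binomial theorem to $\binom{\nuser}{\mpr}(1-1)^{\nuser-\mpr}=0$ for $\nuser>\mpr$.

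Finally, I would substitute the solved $c_j$ back into $\ecri_\nuser = 1 + \sum_{j=\mpr+1}^{\nuser}\binom{\nuser}{j}c_j$, apply $\binom{\nuser}{j}\binom{j}{\mpr}=\binom{\nuser}{\mpr}\binom{\nuser-\mpr}{j-\mpr}$ together with $\binom{j-1}{\mpr}=\tfrac{j-\mpr}{j}\binom{j}{\mpr}$, and change the summation variable to $l=j-\mpr$; this reproduces \eqref{eq:BTA_ecri_final} term-by-term. Beyond the combinatorial identity, everything else is routine algebraic bookkeeping.
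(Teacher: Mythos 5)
Your proof is correct, but it takes a genuinely different route from the paper. The paper Poissonizes: it forms the conditional PGF $\pgf_\nuser(z)$, averages over a Poisson-distributed $\nuser$ with mean $x$ to get the functional equation $\pgf(x,z)=z\,\pgf(px,z)\,\pgf((1-p)x,z)+(z-z^3)\sum_{k\le\mpr}\frac{x^k}{k!}e^{-x}$, differentiates at $z=1$ to obtain a functional equation for the Poisson-averaged mean $\ecri(x)$, extracts its power-series coefficients $a_j$, and then de-Poissonizes via $\ecri_\nuser=\sum_{j}\frac{\nuser!}{(\nuser-j)!}a_j$. You instead stay entirely at the level of the recursion: your ansatz $\ecri_\nuser=\sum_m\binom{\nuser}{m}c_m$ is the (invertible, lower-triangular) binomial transform, the Vandermonde collapse of the inner sum to $\binom{\nuser}{j}\bigl(p^j+(1-p)^j\bigr)$ is correct, the reduced system $\sum_{j=\mpr+1}^{\nuser}\binom{\nuser}{j}\bigl(1-p^j-(1-p)^j\bigr)c_j=2$ is indeed triangular with nonzero diagonal (so uniquely solvable), your inductive proof of $\sum_{j=\mpr+1}^{\nuser}(-1)^{j-\mpr-1}\binom{\nuser}{j}\binom{j-1}{\mpr}=1$ (Pascal plus $(1-1)^{\nuser-\mpr}=0$) is sound, and the final repackaging via $\binom{j-1}{\mpr}=\frac{j-\mpr}{j}\binom{j}{\mpr}$ reproduces \eqref{eq:BTA_ecri_final}. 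In effect your $c_j$ equals $j!\,a_j$ of the paper, so the two computations are dual: yours is more elementary and self-contained (no generating functions, no Poisson transform, no interchange of infinite sums), at the price of having to guess the closed form of $c_j$ and verify it; the paper's PGF machinery produces the coefficients constructively from the functional equation and extends directly to higher moments and to the other tree variants it later invokes. One small slip in your write-up: $1-p^j-(1-p)^j$ vanishes only at $j=1$; at $j=0$ it equals $-1$, and it is precisely this term (times $c_0=1$) that turns the right-hand side into $2$ in your reduced system --- the displayed equation is correct, but the stated justification for dropping the $j=0$ term is not.
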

\begin{proof}
See appendix.
\end{proof}
\begin{figure}[t]
\centering
\input{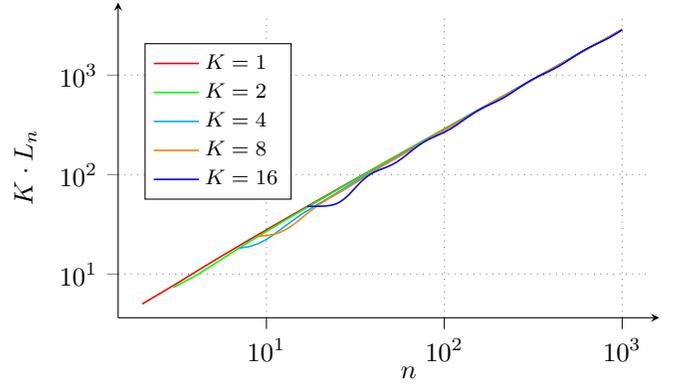}
\caption{Conditional length of \ac{CRI} $\ecri_\nuser$ of \ac{BTA} on $\mpr$-collision channel for varying number of  users $\nuser$ and $\mpr$.} \label{fig:BTA_L}
\end{figure}

The value of \eqref{eq:BTA_ecri_final} is minimized for $p=\frac{1}{2}$, as the value of the denominator on \ac{rhs} becomes maximized.
In other words, fair splitting achieves minimal $\ecri_\nuser$, given by
\begin{align}
\label{eq:BTA_ecri_fair}
\ecri_\nuser = 1 -  {\nuser \choose \mpr} \sum_{j = 1}^{\nuser - \mpr}  \frac{2 \, j \, (-1)^{j} {\nuser -\mpr \choose j}} {(j + \mpr) ( 1 - 2^{-j-\mpr +1})}.
\end{align}       
In the rest of the paper, we assume fair splitting. 

Fig.~\ref{fig:BTA_L} shows $\mpr \cdot \ecri_\nuser$ as function of $\nuser$, for $\mpr\in \{1,2,4,8,16\}$; the multiplication with $\mpr$ makes the comparison fair.
Obviously, as $\nuser$ grows, the difference among curves vanish, and a linear trend of increase in $\mpr \cdot \ecri_\nuser$ emerges (a fact exploited in Section~\ref{sec:arrivals} when developing bounds for $\ecri_\nuser$).

Fig.~\ref{fig:BTA_T_Alt} plots the values of the conditional throughput $\T_n$ corresponding to $\ecri_\nuser$ in Fig.~\ref{fig:BTA_L}.
The maxima $\T_\nuser = 1$ are reached for $\nuser = K$, when the \ac{CRI} lasts a single slot.
As $\nuser$ increases, $\T_\nuser$ shows a damped oscillatory behaviour, where the oscillations' amplitude increases with $\mpr$. 
Also, $\T_\nuser$ tends to stabilize around $0.346$, irrespective of the value of $\mpr$; this fact was already identified for $\mpr = 1$~\cite{massey1981collision}. 
The conclusion is that, as $\nuser$ grows, increase in $\mpr$-\ac{MPR} capability does not increase the efficiency of the use of the system resources.
A potential way to formally prove the conclusion is to extend the asymptotic analysis from~\cite{MF1985}.
However, the related investigation is beyond the scope of the paper and is part of our ongoing work.

\section{Incorporation of Traffic Arrivals}
\label{sec:arrivals}

So far, we analyzed properties of BTA on a $\mpr$-collision channel assuming that the number of users $\nuser$ is given.
An obvious way to incorporate traffic arrivals, where $\nuser$ is a \ac{r.v.}, is via gated access.
In this case, all users that arrive during a \ac{CRI} wait until that \ac{CRI} ends, 
and then transmit in the next available slot, initiating the next \ac{CRI}.
An alternative is to use windowed access, where the time-axis of arrivals is divided into equal-length windows and every window is ``served'' in a separate \ac{CRI}, as follows: all users arriving in the $i$-th window transmit in the first slot after the \ac{CRI} of the users arriving in the $(i-1)$-th window ends, thus starting their own \ac{CRI}.


\begin{figure}[t]
\centering
\input{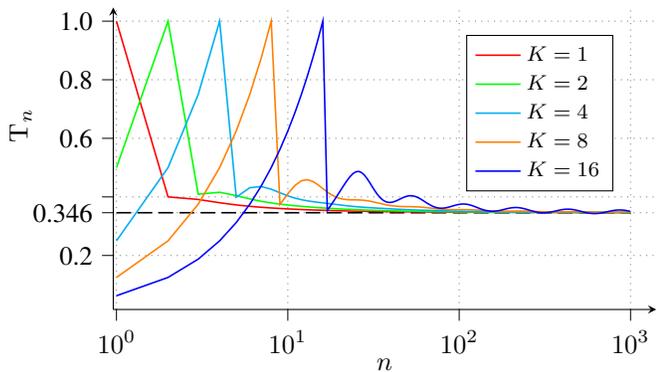}
\caption{Conditional throughput $\T_\nuser$ of \ac{BTA} on $\mpr$-collision channel for varying number of arrived users $\nuser$ and $\mpr$.} \label{fig:BTA_T_Alt}
\end{figure}

\ac{BTA} with windowed access significantly outperforms the gated access version in terms of maximum stable throughput for $\mpr=1$, as outlined in Section~\ref{sec:TA}.
For Poisson arrivals, which are of our interest, the maximum stable throughput is defined as the maximum arrival rate per slot $\lambda$ with a finite expected delay of user resolution.
A fairly tight upper bound on maximum stable throughput for \ac{BTA} with gated access and $\mpr=1$ is $\lambda < 0.346 \, \frac{\mathrm{packet}}{\mathrm{slot}}$~\cite{massey1981collision}, which bears direct relation to the asymptotic behaviour of $\T_n$, see Fig.~\ref{fig:BTA_T_Alt}.
The figure also suggests that the maximum stable throughput will not increase with $\mpr$, as the asymptotic behaviour of $\T_n$ does not change with $\mpr$. 
Thus, hereafter 
we focus on performance evaluation of \ac{BTA} on the $\mpr$-collision channel with windowed access. 

\begin{table}[t]
  \begin{center}
    \caption{\ac{BTA} on $K$-collision channel with windowed access: Bounds on  throughput and optimal window sizes.}
    \label{tab:WA_example}
    \begin{tabular}{|c|c|c|c|c|c|c|c|}
       \hline
       $\mpr$ & $\alpha_m$ & $\beta_m$ & $ \lambda_\text{U} / \mpr$ & $\lambda_\text{S} / \mpr $  & $\lambda_\text{S} \Delta_\text{S} $ & $\Delta_\text{S}$ \\ \hline 
       1 &  2.88538 & 2.8854 &  0.42951  & 0.42951  & 1.149 &  2.675 \\ \hline
       2 & 1.44267 & 1.44272 & 0.47068 & 0.47068  & 1.831 &  1.945 \\ \hline
       4 & 0.72158 & 0.72116 & 0.51751 & 0.51751  & 3.2 &  1.546 \\ \hline
       8 & 0.35907 & 0.36214 & 0.56779 & 0.56779 & 5.967 &  1.314 \\ \hline
       16 & 0.17355 & 0.1859 & 0.62388 & 0.62388 & 11.753 &  1.177 \\ \hline
    \end{tabular}
      \end{center}
\end{table}

Assume a generic arrival window and denote its length by $\Delta$ slots (not necessarily an integer).
The probability of $n$ arrivals ($\nuser \in \mathbb{N}$) in the window is
$\Pr\{ N = \nuser \} = \frac{\lambda \Delta }{\nuser!}e^{ - \lambda \Delta}$,    
i.e., $n$ is a Poisson \ac{r.v.} with mean $\lambda \Delta$.
The expected length of \ac{CRI} is
\begin{align}
    \ecri(\lambda\Delta) = \E \{ \ecri_\nuser | \lambda \Delta \} = \sum_{\nuser=0}^\infty   \ecri_\nuser \frac{(\lambda\Delta)^\nuser}{\nuser!} e^{-\lambda\Delta}
\label{eq:windowed_Acc}
\end{align}
and the goal is to find $\lambda\Delta$ for which
\begin{align}
\label{eq:WA_stability}
 \ecri(\lambda\Delta) < \Delta.  
\end{align}
This condition is necessary for stability, ensuring that arrivals in the window will be served on average in an interval shorter than the window.\footnote{For stability to hold, the condition $\E \{ \ecri^2_\nuser \} < \infty$ has also to be satisfied. This is indeed the case if $\ecri(\lambda\Delta) < \Delta$, but we omit the proof.} 
One can compute~\eqref{eq:windowed_Acc} with arbitrary precision via \eqref{eq:BTA_ecri_final} or \eqref{eq:BTA_ecri_fair} for any $\lambda$ and $\Delta$, and establish the bound in \eqref{eq:WA_stability}.
Hereafter, we proceed with a convenient and insightful bounding method from~\cite{massey1981collision}.
It can be shown that
\begin{align}
\label{eq:BTA_bound}
    \alpha_m n - 1 \leq \ecri_n \leq \beta_m n - 1
\end{align}
which holds for any $m \geq 1$ and $n > m$,
where $\alpha_m = \inf_{n>m} \frac{\sum_{i=0}^{m - 1}{n \choose i}(\ecri_i + 1)}{\sum_{i=0}^{m - 1}{n \choose i}i}$ and $\beta_m = \sup_{n>m} \frac{\sum_{i=0}^{m - 1}{n \choose i}(\ecri_i + 1)}{\sum_{i=0}^{m - 1}{n \choose i}i}$.
By substituting \eqref{eq:BTA_bound} into \eqref{eq:windowed_Acc}, it can be shown that
\begin{align}
    f (\alpha_m, m, \lambda \Delta) \leq L ( \lambda \Delta ) \leq f (\beta_m, m, \lambda \Delta) 
\end{align}
where
\begin{align}
    f ( x, k, z ) = x \cdot z - 1 + \sum_{i = 0}^{k } ( L_i - x \cdot i + 1 ) \frac{ z^i }{i !} e^{ -z}.
\end{align}
The scheme will be stable if
\begin{align}
      f (\beta_m, m, \lambda\Delta) < \Delta 
\end{align}
which yields the bound
\begin{align}
    \label{eq:WA_SB}
    \lambda < \sup_{\lambda \Delta > 0 } \frac{\lambda \Delta}{f (\beta_m, m, \lambda\Delta)} = \lambda_\text{S}.
\end{align}
Similarly, the bound for which the scheme will be unstable is
\begin{align}
\label{eq:WA_UB}
    \lambda > \sup_{\lambda \Delta > 0 } \frac{\lambda \Delta}{f (\alpha_m, m, \lambda\Delta)} = \lambda_\text{U}.
\end{align}

Table~\ref{tab:WA_example} lists values of $\alpha_m$, $\beta_m$, $\frac{\lambda_\text{U}}{\mpr}$, $\frac{\lambda_\text{S}}{\mpr}$, $\lambda_\text{S} \Delta_\text{s}$ and $\Delta_\text{S}$ (the value of $\Delta$ that corresponds to $\lambda_\text{S}$ in \eqref{eq:WA_SB}),
obtained for $m = 50$.
Obviously, the bounds $\frac{\lambda_\text{S}}{\mpr}$ and $\frac{\lambda_\text{U}}{\mpr}$ very tight, their difference can be observed only for decimal places not shown in the table, where $\frac{\lambda_\text{S}}{\mpr} < \frac{\lambda_\text{U}}{\mpr}$.
Notably, $\lambda_\text{S} / \mpr$ \emph{increases} with $\mpr$, implying that use of \ac{MPR} can indeed pay off.
Another interesting observation is that the values of $\lambda_\text{S} \Delta_\text{s}$ are less than $\mpr$ for $\mpr > 1$.
An intuitive explanation for this can be found in Fig.~\ref{fig:BTA_T_Alt}.
The maximum value of $\T_\nuser$ is 1 for any $\mpr$, attained for $\nuser = \mpr$, and, for $\mpr > 1$, the values of $\T_\nuser$ are higher for $\nuser$ that approaches $\mpr$ from the left, than for $\nuser > \mpr$.
Thus, the optimal\footnote{Optimal in the sense that it maximizes the stable throughput.} mean of the Poisson distribution of the number of arrivals in a window is less than $\mpr$. 
Also, it can be observed that the optimal window size in slots $\Delta_\text{s}$ decreases with $\mpr$.

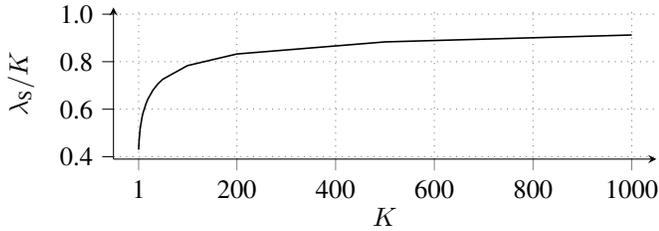
\begin{figure}[t]
\centering
\begin{tikzpicture}


\begin{axis}[
     axis lines=left,
     width=8.8cm,
     height=3.6cm,
     grid=both,
     grid style={dotted,gray},
     legend cell align=left,
    legend style={at={(0.55,0.75)},anchor=west},
tick align=outside,
tick pos=both,
xlabel={$\mpr$},
xmajorgrids,
xmin=-50, xmax=1050,
xtick={1,200,400,600,800,1000},
xticklabels={1,200,400,600,800,1000},
ylabel={$\lambda_{\text{S}}/\mpr$},
ymajorgrids,
ymin=0.39, ymax=1.0285245,
ytick style={color=black},
ytick={0.4,0.6,0.8,1},
yticklabels={0.4,0.6,0.8,1.0}
]

\addplot [semithick, black]
table {windowed_lambda.tex};

\end{axis}

\end{tikzpicture}
\vspace{-0.3cm}
\caption{Maximum stable throughput $\lambda_\text{S}/K$ for increasing $\mpr$.} \label{fig:windowed_asymp}
\end{figure}

Fig.~\ref{fig:windowed_asymp} plots $\lambda_\text{S}/\mpr$ as function of $\mpr$.
Obviously, the bound on maximum stable throughput for this scheme steadily increases with $\mpr$.
We conjecture that $\lim_{\mpr \rightarrow \infty} \lambda_\text{S}/\mpr = 1$, leaving the proof for our future work.





\section{Extensions to Other Variants of Tree-Algorithms}
\label{sec:extensions}


A straightforward extension of the analysis can be made for the case of \ac{MTA} on a $\mpr$-collision channel, omitted here due to space constraints.
Instead, we depict $\T_\nuser$ of the scheme with fair splitting in Fig.~\ref{fig:MTA_T}, where the y-axis is zoomed to emphasize the most important details.
Obviously, as $\nuser$ grows, $\T_\nuser$ drops as $\mpr$ increases, and tends to the value characteristic for \ac{BTA}.
This is a consequence of the fact that a collision slot involves at least $\mpr$ packets, and with increasing $\mpr$, the probability of observing an idle slot immediately after a fair split of such slot decreases, counteracting the premise from which \ac{MTA} draws its gain with respect to \ac{BTA}.

We proceed by evaluating the performance of \ac{MTA} with fair splitting and clipped access\footnote{See~\cite{RS1990} for the details of this scheme.} on a $\mpr$-collision channel.
The impact of $\mpr$-\ac{MPR} can be taken into account by adjusting the expected number of resolved users per \ac{CRI} in the analysis presented in~\cite{RS1990}; we omit the details due to space constraints.
The evolution of $\lambda_\text{S} / \mpr$ for this scheme with $\mpr$ is given in Table~\ref{tab:FCFS_kmpr}.
Obviously, when compared with $\lambda_\text{S} / \mpr$ in Table~\ref{tab:WA_example}, the performance gap between \ac{BTA} with windowed access and \ac{MTA} with clipped access diminishes as $\mpr$ grows, due to the reasons outlined in the previous paragraph. 

\begin{figure}[t]
\centering
\begin{tikzpicture}


\begin{axis}[
     axis lines=left,
     width=8.8cm,
     height=5.6cm,
     grid=both,
     grid style={dotted,gray},
     legend cell align=left,
    legend style={at={(0.55,0.75)},anchor=west},
log basis x={10},
tick align=outside,
tick pos=both,
xlabel={$\nuser$},
xmajorgrids,
xmin=0.957945784384138, xmax=1412.53754462275,
xmode=log,
xtick style={color=black},
xtick={0.01,0.1,1,10,100,1000,10000,100000},
xticklabels={\(\displaystyle {10^{-2}}\),\(\displaystyle {10^{-1}}\),\(\displaystyle {10^{0}}\),\(\displaystyle {10^{1}}\),\(\displaystyle {10^{2}}\),\(\displaystyle {10^{3}}\),\(\displaystyle {10^{4}}\),\(\displaystyle {10^{5}}\)},
ylabel={\; \; \; \; $\T_\nuser $},
ymajorgrids,
ymin=0.3, ymax=0.5,
ytick style={color=black},
ytick={0,0.2,0.346,0.375, 0.4,0.6,0.8,1,1.2},
yticklabels={0.0,0.2,0.346,0.375, 0.4,0.6,0.8,1.0,1.2}
]
\path [draw=black, semithick, dash pattern=on 5.55pt off 2.4pt]
(axis cs:1,0.346)
--(axis cs:1000,0.346);

\path [draw=black, semithick, dash pattern=on 5.55pt off 2.4pt]
(axis cs:1,0.375)
--(axis cs:1000,0.375);

\addplot [semithick,dashed, red]
table {BTA_kmpr_1.txt};
\addlegendentry{\footnotesize{BTA, $\mpr= 1$}}

\addplot [semithick, red]
table [col sep=comma]{mta_mpr1.txt};
\addlegendentry{\footnotesize{MTA, $\mpr= 1$}}

\addplot [semithick,dashed, cyan]
table {BTA_kmpr_2.txt};
\addlegendentry{\footnotesize{BTA, $\mpr= 2$}}

\addplot [semithick, cyan]
table [col sep=comma] {mta_mpr2.txt};
\addlegendentry{\footnotesize{MTA, $\mpr= 2$}}

\addplot [semithick,dashed, blue]
table {BTA_kmpr_4.txt};
\addlegendentry{\footnotesize{BTA, $\mpr= 4$}}

\addplot [semithick, blue]
table [col sep=comma] {mta_mpr4.txt};
\addlegendentry{\footnotesize{MTA, $\mpr= 4$}}
\end{axis}

\end{tikzpicture}
\caption{Conditional throughput $\T_\nuser$ of \ac{MTA} on $\mpr$-collision channel for varying number of arrived users $\nuser$ and $\mpr$.} \label{fig:MTA_T}
\end{figure}
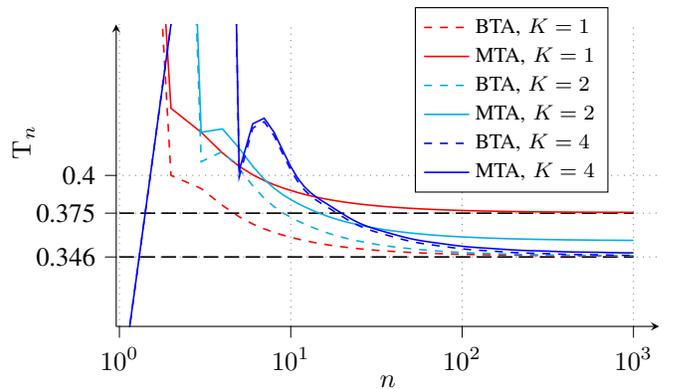

The analysis can be also extended to $d$-ary tree algorithms, where we expect that use of $\mpr$-\ac{MPR} will bring some new insights.
Specifically, for $\mpr = 1$, the ternary \ac{MTA} with biased splitting features the superior performance~\cite{MF1985}.
However, the analogous conclusion for $\mpr > 1 $ is not obvious, as increase in $\mpr$ increases the chances that packets contained in a slot become decoded, an effect also achieved with $d$-ary splitting.
The corresponding investigation is part of our ongoing work.




\section{Conclusions}

\begin{table}[t]
  \begin{center}
    \caption{Maximum stable throughput for \ac{MTA} with clipped access on $\mpr$-collision channel.}
    \label{tab:FCFS_kmpr}
    \begin{tabular}{|c|c|c|c|c|c|c|c|}
       \hline
       $\mpr$ & $1$ & $2$ & $ 4$ & $8 $ & $16$ \\ \hline
      $\lambda_{\text{S}}/ \mpr$ &   0.48703 &  0.4923  &   0.52257   &   0.56844 &   0.62388 \\ \hline
    \end{tabular}
      \end{center}
\end{table}

\label{sec:conclusions}
In this paper we analyzed binary tree algorithms with \ac{MPR}, which is seen as a promising communication technique for massive access scenarios in 5G.
Our results show that adoption of \ac{MPR} in \ac{BTA} with windowed access has a beneficial effect on the stable throughput performance.
Interestingly, as \ac{MPR}-capability increases, the performance gap between \ac{MTA} with clipped access and \ac{BTA} with windowed access closes, implying that the added complexity of the former scheme does not pay-off.

\appendix
\label{sec:appendix}

\begin{proof}[Proof of Theorem~\ref{th:non_recursive}]
We prove the theorem using the method elaborated in~\cite{MF1985}.
We first compute the conditional \ac{PGF} of $\cri_\nuser$, given by
\begin{align}
\pgf_\nuser(z) = \mathrm{E} \left\{ z^{\cri_n} \right\}, \text{ where } \pgf_\nuser (z) = z \text{ for } 0 \leq \nuser \leq \mpr .
\label{eq:BTA_ic}
\end{align}
%
It is straightforward to verify that, for $\nuser > \mpr$, 
\begin{align}
\label{eq:BTA_cpgf}
\pgf_\nuser(z)  = z \sum_{i=0}^\nuser  { \nuser \choose i } p^{ i} ( 1 - p )^{\nuser - i}  \pgf_{i}(z) \pgf_{\nuser - i}(z).
\end{align}

We proceed by computing (unconditional) \ac{PGF} of \ac{CRI}, assuming that 
$\nuser$ is a Poisson \ac{r.v.} with mean $x$.
The \ac{PGF} is
\begin{align}
& \pgf (x,z)  =  \sum_{\nuser = 0}^\infty \pgf_\nuser (z) \frac{x^\nuser}{\nuser!} e^{-x} \label{jpgf} \\
& = z\sum_{\nuser = 0}^\infty \frac{x^\nuser}{\nuser!} e^{-x} \sum_{i=0}^{\nuser} { \nuser \choose i } p^{ i} ( 1 - p )^{\nuser -i}  \pgf_{i}(z)\pgf_{\nuser - i}(z) \, +  \nonumber \\ 
        & + \left(z - z^3\right) \sum_{k=0}^{\mpr} \frac{x^k}{k!} e^{-x} \label{upgf}
\end{align}
where we exploited
\eqref{eq:BTA_ic} and \eqref{eq:BTA_cpgf}, and used the fact that 
\begin{align}
    \sum^\nuser_{i=0} {n \choose i} p^{i} (1 - p)^{\nuser - i} \pgf_{i}(z) \pgf_{\nuser - i}(z)  = z^2,  \;  n \leq \mpr.
\end{align}
With some manipulation, the first term on \ac{rhs} in \eqref{upgf} becomes
\begin{align}
z \sum^\infty_{i=0} \pgf_{i}(z) \frac{\left( p x \right)^{i}}{i!} e^{-p x} \sum^\infty_{n = 0} & \pgf_{n}(z) \frac{\left( ( 1 - p ) x \right)^{n}}{n!} e^{-(1 - p ) x}.
\end{align}
Thus, \eqref{upgf} can be re-written as
\begin{align}
\label{upgf2}
\pgf(x,z) = z\pgf(p x,z)\pgf\left( (1-p) x,z \right)  + \left(z-z^{3}\right)\sum_{k=0}^{\mpr} \frac{x^k}{k!} e^{-x} .
\end{align}
To calculate the first moment of the \ac{PGF}, we derive \eqref{upgf2} \ac{w.r.t.} $z$ and substitute $z=1$.
This yields the mean of the (unconditional) \ac{CRI} length, denoted by $\ecri (x) $   
\begin{align}
\ecri (x) = 1 + \ecri (p x) +\ecri ((1-p) x) - 2\sum_{k=0}^{\mpr} \frac{x^k}{k!}  e^{-x}
\label{eq:step_2_n}
\end{align}
where we used the fact that $\pgf(x, 1) = 1$, $\forall x$.

Similarly, deriving \eqref{jpgf} \ac{w.r.t.} $z$ and substituting $z=1$ yields
\begin{align}
\label{ps_ecri}
L(x) = \sum^\infty_{n=0} \ecri_\nuser \frac{x^\nuser}{\nuser!}  e^{-x} \stackrel{ \text{(a)}}{=} \sum^\infty_{\nuser=0} a_\nuser x^\nuser
\end{align}
where (a) stems from an assumed power series representation of $\ecri (x)$.
By multiplying \eqref{ps_ecri} with $e^{x}$, and then applying the power series expansion of $e^x$, it can be shown that
\begin{align}
\ecri_\nuser  = \sum^\nuser_{j=0} \frac{n!}{(\nuser-j)!}a_j.
\label{non_recursive}
\end{align}
We now find $a_j$, $j= 0,1,...,n$.
From \eqref{eq:BTA_ecri_1}, it follows that
\begin{align}
    a_j = \begin{cases}
    1, & j = 0 \\
    0, & j = 1,\dots, \mpr .
    \end{cases}
    \label{alpha_weight_t}
\end{align}
Further, substituting \eqref{ps_ecri} into \eqref{eq:step_2_n} and using the power series expansion of $e^{-x}$ yields
\begin{align}
 \sum^\infty_{\nuser=0}  a_\nuser  ( 1 - p^\nuser & - (1 - p)^\nuser) x^\nuser   = 1 - 2 \sum_{k=0}^{\mpr} \frac{x^k}{k!}\sum^\infty_{\nuser=0} (-1)^\nuser \frac{x^\nuser}{\nuser!} \nonumber \\
 & = 1 + 2\sum_{\nuser = 0}^\infty \sum_{k=0}^{\min(\nuser,\mpr)} \frac{( - 1)^{\nuser-k+1}}{ k! (\nuser - k )!} x^\nuser.
 \label{eq:ps}  
\end{align}
For $\nuser \leq K$, it can be shown that  
\begin{align}
\sum_{k=0}^{\nuser} \frac{( - 1)^{\nuser-k+1}}{ k! (\nuser - k )!} =
\begin{cases}
-1, & n = 0 \\
0, & 0 < n \leq \mpr
\end{cases}
\end{align}
which, coupled with \eqref{alpha_weight_t}, transforms \eqref{eq:ps} into
\begin{align}
 \sum^\infty_{\nuser=\mpr + 1} a_\nuser ( 1 - p^\nuser - & (1 - p)^\nuser) x^\nuser \nonumber = \\ 
 & = 2\sum_{\nuser = \mpr + 1}^\infty \sum_{k=0}^\mpr \frac{( - 1)^{\nuser-k+1}}{ k! (\nuser - k )!} x^\nuser .  \label{eq:exp_coeffs}
\end{align}
Solving \eqref{eq:exp_coeffs} for $a_\nuser$, $\nuser \geq \mpr$, we get
\begin{align}
a_\nuser  = {\sum_{k=0}^{\mpr} \frac{(-1)^{\nuser-k+1}}{k! (\nuser - k)!}}\cdot \frac{2}{1 - p^\nuser-(1-p)^\nuser}.
\label{alpha_weight_n}
\end{align}
By substituting \eqref{alpha_weight_t} and \eqref{alpha_weight_n} in \eqref{non_recursive} for $\nuser > \mpr$, and after some manipulation, we get
\begin{align}
\ecri_\nuser = 1 + \sum_{j = \mpr + 1}^\nuser {\nuser \choose j}  \frac{2 (-1)^{j+1}}{1 - p^j-(1-p)^j}\sum_{k=0}^{\mpr} {j \choose k} (-1)^{-k}.
\label{eq:mpr_bias_wosic}
\end{align}
Using the identity that holds for $ \mpr < j$
\begin{align}
    \sum_{k=0}^{\mpr} (-1)^{k} {j \choose k}  = ( - 1)^\mpr { j - 1 \choose \mpr }
\end{align}
we further simplify \eqref{eq:mpr_bias_wosic} to 
\begin{align}
& \ecri_\nuser  = 1 + \sum_{j = \mpr + 1}^\nuser {\nuser \choose j}   {j-1 \choose \mpr} \frac{2\, (-1)^{j -\mpr + 1} }{1 - p^j - ( 1 - p )^ j }. 
\label{eq:mpr_bias_wosic_2}
\end{align}
Finally, we get
\begin{align}
\ecri_\nuser = 1 -  {\nuser \choose \mpr} \sum_{j = 1}^{\nuser - \mpr}  \frac{2 \, j \, (-1)^{j} {\nuser -\mpr \choose j}} {(j + \mpr) ( 1 - p^{j+\mpr}-(1-p)^{j+\mpr})}
\end{align}
which concludes the proof.
\end{proof}



\addtolength{\textheight}{-12cm}   





\bibliographystyle{IEEEtran}

\bibliography{bibliography}

\end{document}